\documentclass[10pt,twocolumn]{article}
\usepackage[letterpaper,top=0.68in,bottom=0.68in,left=0.68in,right=0.68in,columnsep=0.24in]{geometry}
\usepackage[T1]{fontenc}
\usepackage{lmodern}
\usepackage{amsmath,amssymb,mathtools,bm}
\usepackage{graphicx}
\usepackage{microtype}
\usepackage{cite}
\usepackage{etoolbox}
\usepackage{xcolor}
\usepackage{hyperref}
\hypersetup{colorlinks=true,citecolor=blue!48!black,linkcolor=blue!48!black,urlcolor=blue!48!black}

\AtBeginEnvironment{thebibliography}{\small\setlength{\itemsep}{0pt}}

\newcommand{\Tr}{\operatorname{Tr}}
\newcommand{\EF}{E_{\mathrm F}}
\newcommand{\EoFG}{E_{\mathrm F}^{\mathrm G}}
\newcommand{\Cchi}{C_{\chi}^{(1)}}
\newcommand{\CchiG}{C_{\chi,\mathrm G}^{(1)}}
\newcommand{\dd}{\mathrm d}
\newcommand{\avg}[1]{\overline{#1}}
\newcommand{\gauss}[1]{#1^{\mathrm G}}
\usepackage{booktabs}
\usepackage{array}
\newcommand{\ket}[1]{\lvert #1\rangle}

\newcommand{\ketbra}[1]{\lvert #1\rangle\!\langle #1\rvert}
\newtheorem{lemma}{Lemma}

\begin{document}

\twocolumn[
\begin{center}
{\Large\bfseries Gaussian Optimality of Energy-Constrained One-Shot Communication\\[-1mm]
through Single-Mode Bosonic Gaussian Channels\par}
\vspace{0.50em}
{\normalsize Se-Wan Ji$^{1,*}$\par}
\vspace{0.12em}
{\small $^1$The Affiliated Institute of ETRI, Daejeon 34044, Republic of Korea\\
$^*$\href{mailto:sewanji@nsr.re.kr}{sewanji@nsr.re.kr}\par}
\vspace{0.50em}
\begin{minipage}{0.92\textwidth}
\small
We prove Gaussian optimality for the energy-constrained one-shot Holevo capacity of single-mode bosonic Gaussian channels, including phase-sensitive channels with arbitrarily squeezed thermal noise.  The unresolved regime is the low-energy branch, where the Gaussian optimizer modulates only one quadrature and minimum-output-entropy arguments cannot decouple the average state from the letters.  For pure one-mode dilations we retain a stronger pointwise result: moment-matched Gaussianization improves the fixed-average Holevo function for every input.  For mixed environments, whose purification produces a $1{:}2$ entanglement-of-formation problem, we avoid any generic $1{:}2$ Gaussian extremality conjecture.  The optimal Gaussian letter selects an effective environmental Schmidt mode and an affine two-mode EPR witness.  Its null direction is exactly the modulated quadrature, while its slope equals the negative derivative of the Gaussian letter-output entropy.  This produces a supporting lower bound on the dilated entanglement of formation; Gaussian maximum entropy and concavity then give a global upper bound tangent at the Gaussian optimizer.  Consequently the known Gaussian formulas for attenuating, amplifying, phase-conjugating, and additive-noise fiducial channels are exact over unrestricted ensembles at every input energy.  Via the passive-input fiducial decomposition and energy-constrained continuity, the result extends to every single-mode Gaussian channel, including lower-rank canonical limits.  No additivity across channel uses is assumed.
\end{minipage}
\end{center}
\vspace{0.50em}
]

Bosonic Gaussian channels model optical, microwave, and hybrid linear communication links~\cite{CavesDrummond1994,HolevoWerner2001,Braunstein2005,Weedbrook2012,HolevoGiovannetti2012}.  Their energy-constrained classical information is governed by continuous-ensemble Holevo quantities~\cite{Holevo1998,Shirokov2006}.  For phase-insensitive Gaussian channels, majorization and Gaussian-optimizer theorems identify the minimum output entropy and the classical capacity~\cite{Mari2014,Giovannetti2015,DePalma2016}.  Squeezed noise breaks the symmetry.  Above a channel-dependent energy threshold, the entropy-maximizing average state and the minimum-output-entropy letter are compatible, so the usual water-filling argument is exact even for squeezed thermal noise~\cite{Schaefer2013,Schaefer2016}.  Below threshold the optimal Gaussian code leaves the noisier quadrature unmodulated, and the known Gaussian solution has remained only a lower bound on the unrestricted one-shot Holevo quantity~\cite{Schaefer2016}.

The obstruction is a coupled optimization.  At fixed average input, the Holevo information is the entropy of the average output minus the least average output entropy of the letters.  A minimum-output-entropy theorem alone does not exclude a correlated non-Gaussian choice of average and letters when the globally minimizing letter cannot be modulated within the energy constraint.  A recent affine theorem for two-mode entanglement of formation~\cite{Adesso2026} supplies the missing convex-roof control.  We first recall the pointwise pure-environment consequence, then show that the affine structure is strong enough to treat purified thermal environments without proving a generic $1{:}2$ extremality theorem.

\emph{Setting.---}
Let $\bm R=(\hat q,\hat p)^T$, $[\hat q,\hat p]=i$, define the first-moment vector $\bm d=\langle\bm R\rangle$ and $\Delta\bm R=\bm R-\bm d$, and set $V_{jk}=\langle\{\Delta R_j,\Delta R_k\}\rangle/2$.  Here $I_2$ is the $2\times2$ identity, so the vacuum covariance is $I_2/2$, and $S(\rho)=-\Tr\rho\log_2\rho$.  In general a Gaussian channel maps $\bm d\mapsto X\bm d+\bm d_0$ and $V\mapsto XVX^T+Y$; the fixed output displacement $\bm d_0$ is irrelevant to all output entropies and Holevo quantities, so we set it to zero and write
\begin{equation}
 \bm d\mapsto X\bm d,\qquad V\mapsto XVX^T+Y .
 \label{eq:channel}
\end{equation}
For an average state $\rho$, generalized pure-state ensembles define
\begin{equation}
 \chi_\Phi(\rho)=S[\Phi(\rho)]-
 \inf_{\int |\psi\rangle\!\langle\psi|\,\mu(\dd\psi)=\rho}
 \int S[\Phi(|\psi\rangle\!\langle\psi|)]\,\mu(\dd\psi),
 \label{eq:fixedchi}
\end{equation}
with entropies in bits.  Under the mean-photon constraint $\Tr\rho\hat N\le\bar N$, $\hat N=(\hat q^2+\hat p^2-1)/2$, write $\Cchi(\Phi,\bar N)=\sup\chi_\Phi(\rho)$ and let $\CchiG$ restrict the average and pure letters to Gaussian states.

For a Stinespring isometry $U:A\to BE$ and $\omega=U\rho U^\dagger$, the infinite-dimensional Matsumoto--Shimono--Winter identity gives~\cite{MSW2004,Shirokov2006}
\begin{equation}
                  \chi_\Phi(\rho)=S(\omega_B)-\EF(B{:}E)_\omega.
 \label{eq:msw}
\end{equation}
The recent affine EPR theorem implies, in particular, that for every finite-energy two-mode state $\sigma$ and its moment-matched Gaussian state $\gauss\sigma$,
\begin{equation}
                 \EF(\sigma)\ge \EF(\gauss\sigma).
 \label{eq:two-mode-ext}
\end{equation}
This yields the following stronger statement whenever the environment itself is one pure mode.

\emph{Theorem 1 (pointwise pure-environment Gaussianization).---}
Let $\Phi:A\to B$ admit a Gaussian Stinespring isometry $A\to BE$ obtained from one pure Gaussian environment mode.  For every finite-energy $\rho$, let $\gauss\rho$ have the same first and second moments.  Then
\begin{equation}
 \chi_\Phi(\rho)\le \chi_\Phi(\gauss\rho)
 =\chi_{\Phi,\mathrm G}(\gauss\rho).
 \label{eq:pointwise}
\end{equation}
Consequently Gaussian encodings attain the one-shot capacity for every confining positive-definite quadratic input cost.

Indeed, Gaussian maximum entropy~\cite{WolfExtremality2006} raises $S(\omega_B)$ under moment matching, Eq.~\eqref{eq:two-mode-ext} lowers the subtracted formation term, and an optimal Gaussian formation ensemble of $\gauss\omega$ lies in $\operatorname{Ran}U$ and pulls back to Gaussian input vectors.  The full infinite-dimensional argument is given in the Supplemental Material~\cite{SupplementalMaterial}.

We now remove the pure-environment restriction for the photon-number problem.  By a passive input rotation and an output Gaussian unitary, every nondegenerate single-mode Gaussian channel is equivalent for the energy-constrained Holevo problem to the fiducial family~\cite{Schaefer2013}
\begin{align}
 X_\tau&=\sqrt{|\tau|}\,\operatorname{diag}(1,\operatorname{sgn}\tau),\nonumber\\
 Y&=y\,\operatorname{diag}(\omega^{-1},\omega),
 \qquad y\ge\frac{|1-\tau|}{2},\quad 0<\omega\le1.
 \label{eq:fiducial}
\end{align}
We denote this fiducial channel by $\Phi_{\tau,\omega,y}$.  Let $k=|\tau|$.  For a centered one-mode Gaussian state with covariance $V$, define
\begin{align}
 h(V)&=g\!\left(\sqrt{\det V}-\tfrac12\right),\label{eq:h}\\
 g(x)&=(x+1)\log_2(x+1)-x\log_2x.
\end{align}
The known low-energy Gaussian optimizer has a pure letter
\begin{equation}
 \gamma(s)=\tfrac12\operatorname{diag}(s^{-1},s),
 \label{eq:letter}
\end{equation}
and, because the $q$ noise is larger, an average covariance
\begin{equation}
 \avg V(s,\bar N)=\operatorname{diag}\!\left(\frac{1}{2s},
 2\bar N+1-\frac{1}{2s}\right).
 \label{eq:average}
\end{equation}
Write $T=2\bar N+1$, $u=(2s)^{-1}$, and introduce
\begin{align}
 G_T(z)&=g\!\left(\sqrt{(kz+y/\omega)[k(T-z)+y\omega]}-\tfrac12\right),
 \label{eq:GT}\\
 e(u)&=g\!\left(\sqrt{(ku+y/\omega)[k/(4u)+y\omega]}-\tfrac12\right).
 \label{eq:eu}
\end{align}
Thus the low-energy Gaussian objective is $G_T(u)-e(u)$.

\emph{Lemma 2 (tangent EPR support).---}
Let $u_*$ be the interior low-energy Gaussian maximizer, $s_*=(2u_*)^{-1}>\omega$, and purify the squeezed thermal environment of Eq.~\eqref{eq:fiducial}.  For the resulting dilation $A\to BE_1E_2$, there exists a quadratic two-mode EPR witness, acting on $B$ and one Gaussian Schmidt mode of $E_1E_2$, such that every centered finite-energy input state with covariance element $V_{qq}=z$ obeys
\begin{equation}
 \EF(B{:}E_1E_2)\ge e(u_*)-\alpha_*(z-u_*),
 \qquad \alpha_*=-e'(u_*)>0.
 \label{eq:tangent-eof}
\end{equation}
The witness is the affine observable of Ref.~\cite{Adesso2026}; its gain is chosen so that the modulated $p$ direction is a null direction.  Explicitly, if
\begin{equation}
 \mu_*=\sqrt{(ku_*+y/\omega)[k/(4u_*)+y\omega]},\qquad
 q_*=\sqrt{\frac{\mu_*-1/2}{\mu_*+1/2}},
 \label{eq:muq}
\end{equation}
then the slope reduces to the channel-local expression
\begin{equation}
 \alpha_*=\frac{k y(s_*^2-\omega^2)}{\mu_*\omega}\log_2\frac1{q_*}
 =-e'(u_*).
 \label{eq:alpha}
\end{equation}
More explicitly, for the ordered EPR pair selected by the seed, the affine observable $\widehat D_{t_*}$ of Ref.~\cite{Adesso2026} satisfies
\begin{equation}
 \begin{aligned}
 \Tr(\Omega\widehat D_{t_*})-d_{t_*}(q_*)&=\beta_*(z-u_*),\\
 \lambda_{q_*,t_*}\beta_*&=\alpha_*=-e'(u_*).
 \end{aligned}
 \label{eq:tangent-mechanism}
\end{equation}
where $d_t(q)$ and $\lambda_{q,t}>0$ are the TMSV witness value and affine coefficient, respectively.  Thus the witness is insensitive to the modulated quadrature and its supporting slope is exactly the derivative required by the Gaussian optimization.  No unrestricted $1{:}2$ Gaussian entanglement-of-formation theorem is used.  The full construction, including the three fiducial dilation classes and the possible EPR-party exchange, is given in the Supplemental Material~\cite{SupplementalMaterial}, using standard Gaussian dilations and modewise decomposition~\cite{Caruso2011,Serafini2005}.

\emph{Theorem 2 (squeezed-thermal one-shot Gaussian optimizer).---}
For every fiducial channel in Eq.~\eqref{eq:fiducial} and every $\bar N<\infty$,
\begin{equation}
 \boxed{\Cchi(\Phi_{\tau,\omega,y},\bar N)
       =\CchiG(\Phi_{\tau,\omega,y},\bar N)}.
 \label{eq:thermal-theorem}
\end{equation}

\emph{Proof below threshold.---}
Common first moments may be removed because they consume photon number and only induce an output displacement.  For an arbitrary centered input covariance
$V=\begin{psmallmatrix}z&c\\c&w\end{psmallmatrix}$ with $z+w\le T$, physicality gives $zw-c^2\ge1/4$ and hence $z(T-z)\ge1/4$.  Gaussian maximum entropy gives
\begin{align}
 S[\Phi(\rho)]
 &\le g\!\left(\sqrt{(kz+y/\omega)(kw+y\omega)-k^2c^2}-\tfrac12\right)\nonumber\\
 &\le G_T(z).
 \label{eq:output-upper}
\end{align}
Combining Eq.~\eqref{eq:msw} with Lemma~2 therefore yields
\begin{equation}
 \chi_\Phi(\rho)\le
 F(z):=G_T(z)-e(u_*)+\alpha_*(z-u_*).
 \label{eq:F}
\end{equation}
The function $(kz+y/\omega)[k(T-z)+y\omega]$ is a positive concave quadratic in $z$; since both the square root and $g(x-1/2)$ are increasing and concave, $G_T$ and hence $F$ are concave.  Stationarity of the Gaussian optimization gives $G_T'(u_*)=e'(u_*)=-\alpha_*$, so $F'(u_*)=0$.  Thus $F(z)\le F(u_*)=G_T(u_*)-e(u_*)$, which is exactly the Gaussian Holevo value and is attained by the one-quadrature Gaussian displacement ensemble.  The case $\bar N=0$ is trivial.

Above threshold, the resonant letter has $s=\omega$ and minimizes output entropy globally by Gaussian-unitary equivalence to a phase-insensitive channel and the Gaussian optimizer theorem~\cite{Mari2014,Giovannetti2015,DePalma2016}.  The entropy maximum saturates the available photon budget; at that fixed output trace it is attained by an isotropic covariance, which is simultaneously feasible with the resonant minimum-entropy letter.  Hence the standard water-filling upper bound is attained.  The additive-noise line $\tau=1$ follows from the $\tau\to1$ limit using the energy-constrained Holevo-capacity continuity bound of Proposition~6 in Ref.~\cite{ShirokovECD2018}; $\tau=0$ and the identity endpoint are immediate. \hfill$\square$

The exact unrestricted capacity is therefore the known Gaussian formula for arbitrary noise amplitude $y$.  Below threshold,
\begin{equation}
 \boxed{\Cchi=\max_{s_-\le s\le s_+}
 \{h(k\avg V(s,\bar N)+Y)-h(k\gamma(s)+Y)\}},
 \label{eq:lowcapacity}
\end{equation}
where $s_\pm=(\sqrt{\bar N+1}\pm\sqrt{\bar N})^2$, and
\begin{equation}
 \bar N_{\mathrm{thr}}=
 \frac{1}{2\omega}\left[1+\frac{y}{k}(1-\omega^2)\right]-\frac12
 \qquad (k>0).
 \label{eq:threshold}
\end{equation}
Above threshold,
\begin{equation}
 \Cchi=g\!\left[k\!\left(\bar N+\tfrac12\right)+\frac{\Tr Y-1}{2}\right]
 -g\!\left(\frac{k}{2}+y-\frac12\right).
 \label{eq:highcapacity}
\end{equation}

\begin{figure*}[t]
 \centering
 \includegraphics[width=0.90\textwidth]{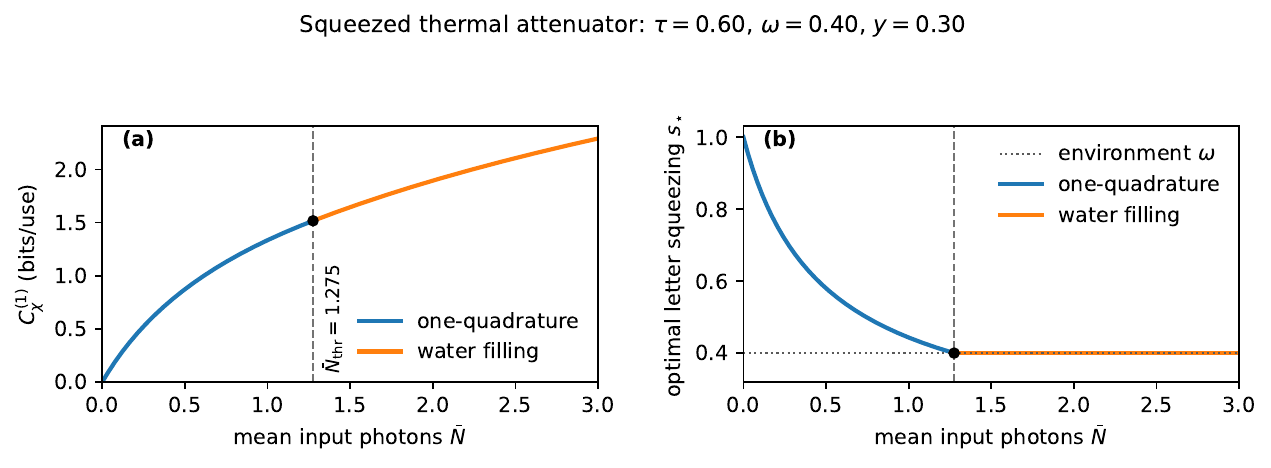}
 \caption{Exact unrestricted one-shot capacity for a genuinely thermal phase-sensitive attenuator: $\tau=0.60$, $\omega=0.40$, and $y=0.30>|1-\tau|/2=0.20$.  (a) The low-energy one-quadrature branch joins water filling at $\bar N_{\mathrm{thr}}=1.275$.  (b) The optimizing letter squeeze decreases continuously from the vacuum value toward the environment squeeze and locks to $s_\star=\omega$ at threshold.  The curves use Eqs.~\eqref{eq:lowcapacity}--\eqref{eq:highcapacity}; the Supplemental Material gives representative values and the numerical reproduction procedure.  The plotting code was prepared with the AI assistance disclosed in the Acknowledgments, and the plotted values were cross-checked against the tabulated numerical results.}
 \label{fig:thermal}
\end{figure*}

Operationally, below threshold a squeezed pure seed and a one-dimensional Gaussian displacement distribution are sufficient even when the physical environment is mixed and squeezed.  Photon subtraction, non-Gaussian seeds, arbitrarily large discrete constellations, and non-Gaussian average states cannot improve the single-use optimum under the same mean-photon constraint.  Above threshold the same seed is displaced in both quadratures.

\emph{Corollary (all single-mode Gaussian channels).---}
For every single-mode bosonic Gaussian channel $\Phi$ and every finite mean-photon constraint,
\begin{equation}
 C_\chi^{(1)}(\Phi,\bar N)=C_{\chi,\mathrm G}^{(1)}(\Phi,\bar N).
 \label{eq:all-channels}
\end{equation}
For full-rank $X$ and $Y$, the fiducial decomposition of Ref.~\cite{Schaefer2013} uses only a passive input phase rotation and an output Gaussian unitary, so Eq.~\eqref{eq:thermal-theorem} applies directly.  For a lower-rank channel choose full-rank Gaussian channels $\Phi_n\to\Phi$.  Finite-matrix convergence gives strong channel convergence, while bounded $X_n,Y_n$ give a uniform output-energy bound on every photon-number constrained input set.  Proposition~6 of Ref.~\cite{ShirokovECD2018} then gives continuity of the unrestricted constrained Holevo capacity along this limit.  Let $C_{\mathrm{disp,G}}^{(1)}$ denote the subclass of pure-Gaussian seeds with Gaussian displacement modulation.  Its covariance variational problem is continuous on a compact energy-constrained domain, and for every $\Phi_n$ the proof above gives $C_{\mathrm{disp,G}}^{(1)}(\Phi_n,\bar N)=C_\chi^{(1)}(\Phi_n,\bar N)$.  Passing to the limit and using $C_{\mathrm{disp,G}}^{(1)}\le C_{\chi,\mathrm G}^{(1)}\le C_\chi^{(1)}$ proves Eq.~\eqref{eq:all-channels}.

\emph{Scope.---}
Equation~\eqref{eq:thermal-theorem} is a single-use statement, not an additivity theorem.  For $n$ uses, MSW contains $\EF(B^n{:}E^n)$ and a single tangent witness no longer controls arbitrary cross-use correlations.  Neither the two-mode theorem of Ref.~\cite{Adesso2026} nor its bisymmetric multimode extension establishes the generic $n\times n$ formation extremality that would be needed to identify the regularized classical capacity.  The present result instead closes the full one-shot non-Gaussian gap for single-mode Gaussian communication under a mean-photon constraint, including the squeezed-thermal regime that lies outside the original two-mode Stinespring argument.

\begin{small}
\noindent\emph{Acknowledgments.---}The author used ChatGPT (GPT-5.6 Sol, OpenAI) as an auxiliary tool for checking algebraic derivations, numerical consistency tests, verification-code assistance, and manuscript editing.  All analytical arguments, numerical results, and cited claims were independently reviewed and validated by the author, who takes full responsibility for the results and conclusions.  This research was supported by the National Research Council of Science \& Technology (NST) grant by the Korea government (MSIT) (No.~CAP22053-200).

\noindent\emph{Data Availability.---}No external data sets were used.  The numerical values in Fig.~\ref{fig:thermal} are generated from the analytic expressions in the paper; the plotting and verification scripts are supplied with the Supplemental Material files.
\end{small}

\clearpage
\newgeometry{margin=0.78in}
\onecolumn
\setcounter{page}{1}
\setcounter{equation}{0}
\setcounter{figure}{0}
\setcounter{table}{0}
\setcounter{footnote}{0}
\setcounter{section}{0}
\setcounter{subsection}{0}
\setcounter{lemma}{0}
\setcounter{proposition}{0}
\setcounter{theorem}{0}
\title{Supplemental Material for\\
``Gaussian Optimality of Energy-Constrained One-Shot Communication through Single-Mode Bosonic Gaussian Channels''}
\author{Se-Wan Ji\\
\small The Affiliated Institute of ETRI, Daejeon 34044, Republic of Korea\\
\small \href{mailto:sewanji@nsr.re.kr}{sewanji@nsr.re.kr}}
\date{}
\begingroup
\fontsize{11pt}{13.2pt}\selectfont
\maketitle
\tableofcontents

\section{Conventions, energy domain, and generalized ensembles}

For an $m$-mode system let
$\bm R=(\hat q_1,\hat p_1,\ldots,\hat q_m,\hat p_m)^T$ and
$[R_j,R_k]=i\Omega_{jk}$, where
$\Omega=\bigoplus_{j=1}^m\begin{psmallmatrix}0&1\\-1&0\end{psmallmatrix}$.
The first moments and covariance matrix of a state $\rho$ are
\begin{equation}
 d_j=\Tr\rho R_j,\qquad
 V_{jk}=\frac12\Tr\rho\{R_j-d_j,R_k-d_k\}.
 \label{eq:s-covariance}
\end{equation}
Here $I_{2m}$ denotes the $2m\times2m$ identity, so the vacuum covariance is $I_{2m}/2$.  We write $S(\rho)=-\Tr\rho\log_2\rho$ and
\begin{equation}
 g(x)=(x+1)\log_2(x+1)-x\log_2x,\qquad x\ge0,
 \label{eq:s-g}
\end{equation}
All entropic logarithms are base two; $\ln$ denotes the natural logarithm.

For the pointwise pure-environment theorem we allow a confining quadratic Hamiltonian
\begin{equation}
 H=\frac12\bm R^T G\bm R+\bm h^T\bm R+c_H,\qquad G>0,
 \label{eq:s-hamiltonian}
\end{equation}
where $G$ is real symmetric positive definite, $\bm h$ is real, and $c_H\in\mathbb R$.  This Hamiltonian is bounded below and satisfies the Gibbs condition $\Tr e^{-\beta H}<\infty$ for every $\beta>0$.  A finite-energy bound then controls first and second moments, and the relevant entropies are finite and continuous on the constrained set.  If $\gauss\rho$ is the Gaussian state with the same first and second moments as $\rho$, then
\begin{equation}
                         \Tr\rho H=\Tr\gauss\rho H.
 \label{eq:s-cost}
\end{equation}
For the thermal-noise extension the cost is the photon number
$\hat N=(\hat q^2+\hat p^2-1)/2$.

A generalized ensemble is a Borel probability measure $\mu$ on the pure-state space with barycenter
\begin{equation}
                         \rho=\int\ketbra\psi\,\mu(\dd\psi).
\end{equation}
The continuous convex roof is the appropriate definition of entanglement of formation and of the convex closure of output entropy in infinite dimension~\cite{Shirokov2006}.

\section{Fixed-average Holevo information and the MSW identity}

Let $\Phi:A\to B$ have Stinespring isometry $U:A\to BE$.  Define
\begin{equation}
 \widehat S_\Phi(\rho)=
 \inf_{\int\ketbra\psi\,\mu(\dd\psi)=\rho}
 \int S[\Phi(\ketbra\psi)]\,\mu(\dd\psi).
 \label{eq:s-convex-closure}
\end{equation}
Refining mixed letters into pure states cannot increase the average output entropy.  For $\omega_{BE}=U\rho U^\dagger$, the generalized fixed-average Holevo function and its infinite-dimensional MSW representation are understood in the continuous-ensemble sense~\cite{MSW2004,Shirokov2006}.  The fixed-average Holevo function satisfies
\begin{equation}
 \chi_\Phi(\rho)=S[\Phi(\rho)]-\widehat S_\Phi(\rho).
 \label{eq:s-fixed-holevo}
\end{equation}

\begin{lemma}[Infinite-dimensional fixed-average MSW identity]
For $\omega_{BE}=U\rho U^\dagger$,
\begin{equation}
                       \widehat S_\Phi(\rho)=\EF(B{:}E)_\omega.
 \label{eq:s-msw}
\end{equation}
\end{lemma}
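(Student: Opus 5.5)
The identity will follow from a bijection between generalized pure-state decompositions of $\rho$ on $A$ and generalized pure-state decompositions of $\omega_{BE}=U\rho U^\dagger$ on $BE$. For a pure letter, the entropy of $\Phi(\ketbra\psi)=\Tr_E U\ketbra\psi U^\dagger$ equals the entanglement entropy of the pure bipartite vector $U\ket\psi$. The strategy is therefore to transport measures through $U$ and compare the two infima termwise.

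\emph{Step 1 (forward direction).} Let $\mu$ be a Borel probability measure on pure states of $A$ with barycenter $\rho$. Its pushforward $U_*\mu$ under $\psi\mapsto U\psi$ is a Borel measure on pure states of $BE$. The map is continuous, since $U$ is an isometry and hence continuous on rays. The pushforward has barycenter $U\rho U^\dagger$, because conjugation by $U$ is a bounded linear map commuting with the Bochner integral in trace norm. Pointwise,
\begin{equation*}
S\bigl(\Tr_E U\ketbra\psi U^\dagger\bigr)=E\bigl(U\ketbra\psi U^\dagger\bigr),
\end{equation*}
so $\EF(B{:}E)_\omega\le\widehat S_\Phi(\rho)$.

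\emph{Step 2 (reverse direction).} Let $\nu$ be any generalized pure-state ensemble of $\omega$. The key fact is that $\nu$ is supported on pure states whose vectors lie in $\operatorname{Ran}U$, i.e. in the support of $\omega$. To show this, let $P$ be the projection onto $(\operatorname{Ran}U)^\perp$. Then
\begin{equation*}
0=\Tr P\omega=\int\langle\phi|P|\phi\rangle\,\nu(\dd\phi),
\end{equation*}
and since the integrand is nonnegative, $P\phi=0$ for $\nu$-almost every $\phi$. On the closed subspace $\operatorname{Ran}U$ the map $U$ has a continuous inverse $U^\dagger$. Pulling back, $(U^\dagger)_*\nu$ is an ensemble of $\rho$ with the same integrand, which gives $\widehat S_\Phi(\rho)\le\EF(B{:}E)_\omega$.

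\emph{Step 3 (measurability and infinite values).} Both integrands are lower semicontinuous, hence Borel, and nonnegative, so the integrals are well defined in $[0,\infty]$. Lower semicontinuity holds because entropy is lower semicontinuous and partial trace is continuous. The equality then holds as an equality of infima, including the value $+\infty$. For the finite-energy inputs used later the common value is finite, because the output entropy is bounded under the Gibbs condition. We should also check that this continuous-measure definition agrees with Shirokov's continuous convex roof definition of $\EF$ used in Ref.~\cite{Shirokov2006}; by that reference, for states with finite marginal entropy the continuous-ensemble and countable-ensemble roofs coincide.

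I expect the main obstacle to be this last point together with the support argument of Step 2, i.e. making the measure-theoretic pushforward rigorous on the non-compact pure-state space. The first thing I would try is to identify pure states with the projective space. This space is Polish under the trace-norm topology. On it, $\psi\mapsto U\psi$ is a homeomorphism onto the closed subset of rays in $\operatorname{Ran}U$, which gives measurability of both pushforwards.
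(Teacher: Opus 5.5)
Your proof is correct and follows the paper's argument: you push input ensembles forward through $U$ for one inequality, and for the other you use $\Tr[(I-UU^\dagger)\omega]=0$ to show that every decomposition of $\omega$ is concentrated on $\operatorname{Ran}U$, then pull it back with $U^\dagger$. Your Step 3 supplies measurability and Polish-space details that the paper leaves implicit; the appeal to equality of countable and continuous roofs is unnecessary, because both sides of the lemma use the same class of generalized ensembles and transport through $U$ preserves that class.
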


\emph{Proof.}
For every pure input $\ket\psi_A$, $U\ket\psi$ is pure and
$S[\Phi(\ketbra\psi)]=\mathcal E(U\ket\psi)$, where $\mathcal E$ is pure-state entanglement across $B{:}E$.  Thus every input ensemble maps to a decomposition of $\omega$ supported on $\mathcal K=\operatorname{Ran}U$.
Conversely, if $P_{\mathcal K}=UU^\dagger$ and
$\omega=\int\ketbra\varphi\,\nu(\dd\varphi)$, then
\begin{equation}
 0=\Tr[(I-P_{\mathcal K})\omega]
  =\int\|(I-P_{\mathcal K})\ket\varphi\|^2\nu(\dd\varphi).
\end{equation}
The nonnegative integrand vanishes almost everywhere, so $U^\dagger\ket\varphi$ defines an input ensemble with barycenter $\rho$.  The two infima coincide. \hfill$\square$

Hence
\begin{equation}
                    \chi_\Phi(\rho)=S(\omega_B)-\EF(B{:}E)_\omega.
 \label{eq:s-fixed-msw}
\end{equation}

\section{Affine two-mode EPR witness}

For each mode $J$, define $\hat a_J=(\hat q_J+i\hat p_J)/\sqrt2$, and for a pure bipartite state $\ket\psi$ write $\mathcal E(\psi)$ for the entropy of either reduced state.  For two modes $A,B$, define
\begin{equation}
 \widehat D_t=(t\hat a_A-\hat a_B^\dagger)^\dagger
                    (t\hat a_A-\hat a_B^\dagger)
 =\frac12\left[(t\hat q_A-\hat q_B)^2+(t\hat p_A+\hat p_B)^2+1-t^2\right]
 \label{eq:s-epr}
\end{equation}
for $0<t\le1$.  Let
\begin{equation}
 \ket{\Psi_q}=\sqrt{1-q^2}\sum_{n=0}^\infty q^n\ket{n,n},
 \qquad 0\le q<1,
\end{equation}
which is the two-mode squeezed vacuum (TMSV) with parameter $q$.  Its entanglement is
\begin{equation}
 \mathcal E(q)=g\!\left(\frac{q^2}{1-q^2}\right),\qquad
 d_t(q)=\frac{(1-tq)^2}{1-q^2}.
\end{equation}
Ref.~\cite{Adesso2026} proves that for every finite-energy pure two-mode state and all $0<q<t\le1$,
\begin{equation}
 \mathcal E(\psi)+\lambda_{q,t}
       [\langle\widehat D_t\rangle_\psi-d_t(q)]\ge\mathcal E(q),
 \label{eq:s-affine}
\end{equation}
where in bits
\begin{equation}
 \lambda_{q,t}=\frac{-2q\ln q}{(t-q)(1-tq)\ln2}>0.
 \label{eq:s-lambda}
\end{equation}
At $q=t$ the endpoint is understood in the sense stated in Ref.~\cite{Adesso2026}.
Since Eq.~\eqref{eq:s-affine} is affine in the state-dependent expectation, averaging gives for every two-mode state $\rho$
\begin{equation}
 \EF(\rho)\ge\mathcal E(q)-\lambda_{q,t}
 [\Tr(\rho\widehat D_t)-d_t(q)]
 \label{eq:s-mixed-witness}
\end{equation}
for each admissible pair $(q,t)$.  Optimizing over $(q,t)$ recovers the mixed-state witness of Ref.~\cite{Adesso2026}.

The same result also implies fixed-covariance Gaussian extremality.  If $\sigma$ is an arbitrary finite-energy two-mode state and $\gauss\sigma$ has identical first moments and covariance, then
\begin{equation}
                         \EF(\sigma)\ge\EF(\gauss\sigma).
 \label{eq:s-eofext}
\end{equation}
Indeed, the covariance geometry of Ref.~\cite{Adesso2026} gives local symplectic coordinates in which
$V'=V_q+N_{\rm cl}$, where $V_q$ is the covariance of $\ket{\Psi_q}$ and $N_{\rm cl}\ge0$ is a classical-displacement covariance orthogonal to the EPR quadratures in Eq.~\eqref{eq:s-epr}.  Thus $\Tr\gauss\sigma\widehat D_t=d_t(q)$, while $N_{\rm cl}$ supplies a Gaussian displacement decomposition of $\gauss\sigma$ into copies of $\ket{\Psi_q}$.  The affine lower bound and this Gaussian upper decomposition meet at $\mathcal E(q)$.

\section{Pure one-mode dilation: pointwise Gaussianization and pullback}

Let $U:A\to BE$ be a Gaussian isometry obtained from one pure Gaussian environment mode.  For arbitrary $\rho$ and its moment-matched Gaussian state $\gauss\rho$, set
\begin{equation}
 \omega=U\rho U^\dagger,\qquad \gauss\omega=U\gauss\rho U^\dagger.
\end{equation}
Gaussianity of the isometry makes $\gauss\omega$ the Gaussian state with the moments of $\omega$.  Gaussian maximum entropy~\cite{WolfExtremality2006} and Eq.~\eqref{eq:s-eofext} give
\begin{equation}
 S(\omega_B)\le S(\gauss\omega_B),\qquad
 \EF(\omega_{BE})\ge\EF(\gauss\omega_{BE}),
\end{equation}
so by Eq.~\eqref{eq:s-fixed-msw}
\begin{equation}
                         \chi_\Phi(\rho)\le\chi_\Phi(\gauss\rho).
 \label{eq:s-gaussianization}
\end{equation}
Equation~\eqref{eq:s-cost} preserves feasibility for every confining positive-definite quadratic cost.

To prove Gaussian achievability at $\gauss\rho$, use $\EF(\gauss\omega)=\EoFG(\gauss\omega)$ from Ref.~\cite{Adesso2026}.  An optimal Gaussian formation ensemble lies in $\operatorname{Ran}U$ almost everywhere by the support argument in the MSW lemma.  If a pure Gaussian vector $\ket\phi_{BE}\in\operatorname{Ran}U$, then $U^\dagger\ket\phi$ is Gaussian: write $U\ket\psi=W_G(\ket\psi\otimes\ket e)$ with pure Gaussian $\ket e$ and Gaussian unitary $W_G$; then $W_G^\dagger\ket\phi=\ket\psi\otimes\ket e$ is Gaussian, hence so is its $A$ marginal.  Pulling back the optimal formation ensemble proves
\begin{equation}
 \chi_\Phi(\rho)\le\chi_\Phi(\gauss\rho)
 =\chi_{\Phi,\mathrm G}(\gauss\rho).
\end{equation}
This is the pointwise statement quoted as Theorem 1 in the Letter.

\section{Fiducial channels and squeezed-thermal dilations}

The standard fiducial representation and its one-mode Gaussian dilations are described in Refs.~\cite{Schaefer2013,Caruso2011}.  The fiducial channel is
\begin{equation}
 X_\tau=\sqrt{k}\,\operatorname{diag}(1,\operatorname{sgn}\tau),\qquad
 Y=y\operatorname{diag}(\omega^{-1},\omega),
 \label{eq:s-fiducial}
\end{equation}
We denote it by $\Phi_{\tau,\omega,y}\equiv\Phi^{\rm F}_{\tau,\omega,y}$, using the superscript only when the fiducial nature needs emphasis.  Here $k=|\tau|$, $0<\omega\le1$, and complete positivity requires
$y\ge r/2$ with
\begin{equation}
                             r=|1-\tau|.
\end{equation}
For $\tau\ne1$, write
\begin{equation}
 \nu=\frac yr\ge\frac12,\qquad
 \zeta=\sqrt{\nu^2-\frac14}.
\end{equation}
The physical environment mode $F$ may be taken in the squeezed thermal state
$V_F=\nu\operatorname{diag}(\omega^{-1},\omega)$ and purified by a mode $E_2$ with covariance
\begin{equation}
 V_{F E_2}=\begin{pmatrix}
 \nu/\omega&0&\zeta/\sqrt\omega&0\\
 0&\nu\omega&0&-\zeta\sqrt\omega\\
 \zeta/\sqrt\omega&0&\nu&0\\
 0&-\zeta\sqrt\omega&0&\nu
 \end{pmatrix}.
 \label{eq:s-purification}
\end{equation}
The canonical two-mode unitary coupling $A$ and $F$ can be chosen as follows, with $E_1$ the other output of that unitary.

For $0<\tau<1$ ($k=\tau$, $r=1-\tau$),
\begin{align}
 \hat q_B&=\sqrt k\,\hat q_A+\sqrt r\,\hat q_F,&
 \hat p_B&=\sqrt k\,\hat p_A+\sqrt r\,\hat p_F,\nonumber\\
 \hat q_{E_1}&=-\sqrt r\,\hat q_A+\sqrt k\,\hat q_F,&
 \hat p_{E_1}&=-\sqrt r\,\hat p_A+\sqrt k\,\hat p_F.
 \label{eq:s-atten}
\end{align}
For $\tau>1$ ($k=\tau$, $r=\tau-1$),
\begin{align}
 \hat q_B&=\sqrt k\,\hat q_A+\sqrt r\,\hat q_F,&
 \hat p_B&=\sqrt k\,\hat p_A-\sqrt r\,\hat p_F,\nonumber\\
 \hat q_{E_1}&=\sqrt r\,\hat q_A+\sqrt k\,\hat q_F,&
 \hat p_{E_1}&=-\sqrt r\,\hat p_A+\sqrt k\,\hat p_F.
 \label{eq:s-amp}
\end{align}
For $\tau=-k<0$ ($r=k+1$), the channel output is the idler output of the same two-mode squeezer:
\begin{align}
 \hat q_B&=\sqrt k\,\hat q_A+\sqrt r\,\hat q_F,&
 \hat p_B&=-\sqrt k\,\hat p_A+\sqrt r\,\hat p_F,\nonumber\\
 \hat q_{E_1}&=\sqrt r\,\hat q_A+\sqrt k\,\hat q_F,&
 \hat p_{E_1}&=\sqrt r\,\hat p_A-\sqrt k\,\hat p_F.
 \label{eq:s-conj}
\end{align}
Equations~\eqref{eq:s-atten}--\eqref{eq:s-conj}, together with the untouched purifier $E_2$, define a pure three-mode Stinespring dilation for every $\tau\ne1$ and $y\ge r/2$.  At the quantum-limited value $y=r/2$, $\zeta=0$ and $E_2$ becomes an irrelevant vacuum spectator.

\section{Effective environmental Schmidt mode}

Take the pure Gaussian letter
\begin{equation}
 \gamma(s)=\begin{pmatrix}u&0\\0&v\end{pmatrix},\qquad
 u=\frac1{2s},\quad v=\frac s2,\quad uv=\frac14.
 \label{eq:s-seed}
\end{equation}
The global output $BE_1E_2$ of Eqs.~\eqref{eq:s-purification}--\eqref{eq:s-conj} is pure.  The system covariance is
\begin{equation}
 V_B(s)=\begin{pmatrix}a&0\\0&b\end{pmatrix},\qquad
 a=ku+\frac y\omega,\quad b=kv+y\omega.
 \label{eq:s-ab}
\end{equation}
Define
\begin{equation}
 \mu=\sqrt{ab},\qquad c=\sqrt{\mu^2-\frac14},\qquad
 \xi=\left(\frac ba\right)^{1/4}.
 \label{eq:s-mucxi}
\end{equation}
After the local squeeze
$\hat q_{B'}=\xi\hat q_B$, $\hat p_{B'}=\hat p_B/\xi$, the $B'$ covariance is $\mu I_2$.
Gaussian Schmidt (modewise) decomposition~\cite{Serafini2005} therefore identifies one environmental mode $E$ such that $B'E$ is a TMSV and the remaining environmental mode $G$ is a vacuum spectator.  Its TMSV parameter is
\begin{equation}
 q=\sqrt{\frac{\mu-1/2}{\mu+1/2}}
   =\frac{\mu-1/2}{c},
 \qquad \mathcal E(q)=g(\mu-1/2).
 \label{eq:s-q}
\end{equation}

For later use, one can construct the effective mode directly.  For observables $A,B$ define $\operatorname{Cov}(A,B)=\langle\{\Delta A,\Delta B\}\rangle/2$.  Let
\begin{equation}
 \bm x=(\operatorname{Cov}(\hat q_{B'},\hat q_{E_1}),\operatorname{Cov}(\hat q_{B'},\hat q_{E_2})),\qquad
 \bm z=(\operatorname{Cov}(\hat p_{B'},\hat p_{E_1}),\operatorname{Cov}(\hat p_{B'},\hat p_{E_2})).
\end{equation}
In the ordering $(q_{B'},q_{E_1},q_{E_2},p_{B'},p_{E_1},p_{E_2})$, the covariance of the pure seed output has no $q$--$p$ block and may be written $V=Q\oplus P$, with
\begin{equation}
 Q=\begin{pmatrix}\mu&\bm x^T\\ \bm x&Q_{\rm env}\end{pmatrix},\qquad
 P=\begin{pmatrix}\mu&\bm z^T\\ \bm z&P_{\rm env}\end{pmatrix}.
 \label{eq:s-QP-blocks}
\end{equation}
Purity, $V\Omega V=\Omega/4$, is equivalent here to $QP=I_3/4$ and gives
\begin{equation}
 \bm x\!\cdot\!\bm z=-c^2,\qquad
 Q_{\rm env}\bm z=-\mu\bm x,\qquad
 P_{\rm env}\bm x=-\mu\bm z.
 \label{eq:s-QP-identities}
\end{equation}
For $c>0$ define
\begin{equation}
 \hat q_E=-\frac{\bm z\cdot(\hat q_{E_1},\hat q_{E_2})}{c},\qquad
 \hat p_E= \frac{\bm x\cdot(\hat p_{E_1},\hat p_{E_2})}{c}.
 \label{eq:s-effective-mode}
\end{equation}
Equation~\eqref{eq:s-QP-identities} gives $[\hat q_E,\hat p_E]=i$ and
\begin{equation}
 \operatorname{Var}(q_E)=\frac{\bm z^TQ_{\rm env}\bm z}{c^2}=\mu,\qquad
 \operatorname{Var}(p_E)=\frac{\bm x^TP_{\rm env}\bm x}{c^2}=\mu.
 \label{eq:s-effective-variances}
\end{equation}
Moreover $\operatorname{Cov}(\hat q_{B'},\hat q_E)=c$ and
$\operatorname{Cov}(\hat p_{B'},\hat p_E)=-c$.  Hence
\begin{equation}
 V_{B'E}=\begin{pmatrix}
 \mu&0&c&0\\0&\mu&0&-c\\c&0&\mu&0\\0&-c&0&\mu
 \end{pmatrix},
 \label{eq:s-explicit-tmsv}
\end{equation}
which is exactly the TMSV covariance.  Completing this canonical pair gives the spectator $G$.  Every strictly below-threshold seed used later has $c>0$; the isolated product point with $c=0$ is treated by continuity.

Introduce the two positive quantities
\begin{equation}
                 \ell=u\xi^2,\qquad m=\frac v{\xi^2},\qquad \ell m=\frac14.
 \label{eq:s-ellm}
\end{equation}
For the low-energy Gaussian branch one has $s\ge\omega$, with strict inequality below threshold.  This is equivalent to
\begin{equation}
 \ell\le\frac12\le m,
 \label{eq:s-ell-order}
\end{equation}
because $\ell\le1/2$ is equivalent to $b\le as^2$, and
\begin{equation}
                         as^2-b=\frac y\omega(s^2-\omega^2).
\end{equation}

\section{Displacement gains and the admissible Adesso witness}

The effective mode in Eq.~\eqref{eq:s-effective-mode} also tracks how input displacements are distributed between $B'$ and $E$.  Define $t_q$ so that the $q_A$ displacement cancels in $t_q q_{B'}-q_E$, and $t_p$ so that the $p_A$ displacement cancels in $t_p p_{B'}+p_E$.  Direct substitution of Eqs.~\eqref{eq:s-atten}--\eqref{eq:s-effective-mode} gives a particularly simple result.

For the covariant cases $\tau>0$ (attenuator and amplifier),
\begin{equation}
                  t_p=\frac{\mu-\ell}{c},\qquad
                  t_q=\frac{\mu-m}{c}.
 \label{eq:s-t-cov}
\end{equation}
For the contravariant phase-conjugating case $\tau<0$,
\begin{equation}
                  t_p=\frac{\mu+\ell}{c},\qquad
                  t_q=\frac{\mu+m}{c}.
 \label{eq:s-t-conj}
\end{equation}
For example, the attenuator has
$t_p=\xi^2(y/\omega-r u)/c$ and the amplifier has
$t_p=\xi^2(y/\omega+r u)/c$; the identities $k+r=1$ and $k-r=1$, respectively, reduce both expressions to Eq.~\eqref{eq:s-t-cov}.  The phase conjugator uses $r-k=1$ and gives Eq.~\eqref{eq:s-t-conj}.

The raw gain $t_p$ always yields an admissible Adesso gain after at most exchanging the two EPR parties.  For $\tau>0$,
\begin{equation}
 t_p-q=\frac{1/2-\ell}{c}\ge0,\qquad
 t_p<\frac1q=\frac{\mu+1/2}{c}.
 \label{eq:s-gain-cov}
\end{equation}
Thus if $t_p\le1$ we use the operator $\widehat D_{t_p}$ of Eq.~\eqref{eq:s-epr} with the ordered parties $(A,B)=(B',E)$; if $t_p>1$ we exchange the parties and use $\widehat D_{1/t_p}$ with $(A,B)=(E,B')$, for which $q<1/t_p<1$ below threshold.  For $\tau<0$, $t_p=(\mu+\ell)/c>1$, but Eq.~\eqref{eq:s-ell-order} gives
\begin{equation}
                       t_p\le\frac1q,
\end{equation}
so the exchanged gain $1/t_p$ again lies in $(q,1)$.  For $c>0$, as $s\downarrow\omega$ the admissible gain reaches the endpoint $q=t$ (after the appropriate exchange in the contravariant case).  In the quantum-limited attenuator the resonant point instead has $\mu=1/2$ and hence $c=q=0$, so the raw gain ratios are of $0/0$ form there; this endpoint is understood only by the $s\downarrow\omega$ limit and is never evaluated in the strictly below-threshold proof.

\section{A \texorpdfstring{$1{:}2$}{1:2} tangent formation bound from a two-mode witness}

The Gaussian Schmidt reduction is used only to select a two-mode witness; no Gaussian optimality theorem for generic $1{:}2$ states is assumed.
Let $\Omega_{B'EG}$ be an arbitrary finite-energy state produced by the same Stinespring dilation, after the fixed local Gaussian transformations chosen from the Gaussian seed at $s$.  We first record explicitly the monotonicity step needed below:
\begin{equation}
 \EF(B'{:}EG)_\Omega\ge\EF(B'{:}E)_{\Tr_G\Omega}.
 \label{eq:s-monotonicity}
\end{equation}
To see this directly in the continuous-convex-roof setting, take any generalized pure-state decomposition $\Omega=\int\ketbra\psi\,\mu(\dd\psi)$ and put $\rho_{B'E}^{\psi}=\Tr_G\ketbra\psi$.  For every $\psi$, concavity of entropy and the convex-roof definition give
\begin{equation}
 \EF(B'{:}E)_{\rho^{\psi}}\le S(\rho^{\psi}_{B'})=\mathcal E(B'{:}EG)_\psi .
 \label{eq:s-trace-pure}
\end{equation}
Convexity of $\EF$ then yields
\begin{equation}
 \EF(B'{:}E)_{\Tr_G\Omega}
 \le\int \EF(B'{:}E)_{\rho^{\psi}}\,\mu(\dd\psi)
 \le\int \mathcal E(B'{:}EG)_\psi\,\mu(\dd\psi).
\end{equation}
Taking the infimum over all decompositions of $\Omega$ proves Eq.~\eqref{eq:s-monotonicity}.
Applying Eq.~\eqref{eq:s-mixed-witness} to the reduced two-mode state gives
\begin{equation}
 \EF(B'{:}EG)_\Omega\ge
 \mathcal E(q)-\lambda_{q,t}
 [\Tr(\Omega\widehat D_t)-d_t(q)].
 \label{eq:s-lifted-witness}
\end{equation}
Here $t=t_p$ if the raw gain is at most one; otherwise $t=1/t_p$ and the EPR parties are exchanged.  In Eq.~\eqref{eq:s-lifted-witness}, $\widehat D_t$ always denotes the correspondingly ordered two-mode operator specified in Sec.~7.

Now let the arbitrary centered input have covariance
\begin{equation}
 V=\begin{pmatrix}z&c_0\\c_0&w_0\end{pmatrix}.
 \label{eq:s-arbitrary-V}
\end{equation}
The witness was chosen so that its $p$ EPR quadrature has no response to an input $p$ displacement.  Its $q$ EPR quadrature has response proportional to $t_p-t_q$.  Since $\widehat D_t$ is quadratic and the environment state is fixed, comparison with the seed covariance in Eq.~\eqref{eq:s-seed} gives exactly
\begin{equation}
 \Tr(\Omega\widehat D_t)-d_t(q)=\beta_{\rm W}(z-u),
 \label{eq:s-beta-def}
\end{equation}
with no dependence on $w_0$ or $c_0$.  If the raw gain is used,
\begin{equation}
             \beta_{\rm W}=\frac{k\xi^2}{2}(t_p-t_q)^2,
 \label{eq:s-beta-raw}
\end{equation}
and after exchanging the EPR parties,
\begin{equation}
             \beta_{\rm W}=\frac{k\xi^2}{2t_p^2}(t_p-t_q)^2.
 \label{eq:s-beta-swap}
\end{equation}
The Gaussian seed itself has $B'E$ exactly in the TMSV state, so the reference expectation in Eq.~\eqref{eq:s-beta-def} is $d_t(q)$.

Let
\begin{equation}
                         \alpha=\lambda_{q,t}\beta_{\rm W}.
 \label{eq:s-alpha-def}
\end{equation}
A useful fact is that the two gain branches and the possible EPR-party exchange all reduce to the same channel-local slope.  First,
\begin{equation}
 m-\ell=\frac{va-ub}{\mu}
       =\frac{y(s^2-\omega^2)}{2s\omega\mu}.
 \label{eq:s-mell}
\end{equation}
For the covariant raw-gain case, Eqs.~\eqref{eq:s-t-cov}, \eqref{eq:s-q}, and \eqref{eq:s-ellm} give
\begin{equation}
 t_p-q=\frac{1/2-\ell}{c},\qquad
 1-t_pq=\frac{1/2+\ell}{\mu+1/2},
\end{equation}
and hence
\begin{equation}
 \lambda_{q,t_p}=\frac{8c^2}{1-4\ell^2}\log_2\frac1q.
\end{equation}
Using $t_p-t_q=(m-\ell)/c$, Eq.~\eqref{eq:s-beta-raw}, $m=1/(4\ell)$, and $\xi^2=\ell/u=2s\ell$ yields
\begin{equation}
 \alpha=2ks(m-\ell)\log_2\frac1q.
 \label{eq:s-alpha-intermediate}
\end{equation}
If $t_p>1$, the admissible witness uses $t'=1/t_p$ and Eq.~\eqref{eq:s-beta-swap}.  Its affine coefficient satisfies the exact identity
\begin{equation}
 \lambda_{q,1/t_p}=t_p^2\lambda_{q,t_p},
 \label{eq:s-lambda-swap}
\end{equation}
because $(1/t_p-q)(1-q/t_p)=(t_p-q)(1-t_pq)/t_p^2$.  Hence the factor $t_p^2$ cancels the $t_p^{-2}$ in Eq.~\eqref{eq:s-beta-swap}, so the raw- and exchanged-party products $\lambda\beta_{\rm W}$ coincide.

For the contravariant class, $t_p=(\mu+\ell)/c$ and $t_q=(\mu+m)/c$, so
\begin{equation}
 t_p-q=\frac{1/2+\ell}{c},\qquad
 1-t_pq=\frac{1/2-\ell}{\mu+1/2},\qquad
 (t_p-t_q)^2=\frac{(m-\ell)^2}{c^2}.
 \label{eq:s-conj-algebra}
\end{equation}
The first two relations give the same algebraic coefficient
\begin{equation}
 \lambda_{q,t_p}=\frac{8c^2}{1-4\ell^2}\log_2\frac1q
\end{equation}
as in the covariant branch.  Although $t_p>1$ is not itself an admissible Adesso gain, Eq.~\eqref{eq:s-lambda-swap} shows that the actual exchanged witness $t=1/t_p$ multiplied by Eq.~\eqref{eq:s-beta-swap} has exactly this product.  Together with the last relation in Eq.~\eqref{eq:s-conj-algebra}, this again gives Eq.~\eqref{eq:s-alpha-intermediate}.  Inserting Eq.~\eqref{eq:s-mell},
\begin{equation}
 \boxed{\alpha=\frac{k y(s^2-\omega^2)}{\mu\omega}\log_2\frac1q.}
 \label{eq:s-alpha}
\end{equation}
Therefore Eq.~\eqref{eq:s-lifted-witness} becomes
\begin{equation}
 \boxed{\EF(B{:}E_1E_2)_\Omega
 \ge g(\mu-1/2)-\alpha(z-u).}
 \label{eq:s-tangent-bound}
\end{equation}
Local Gaussian transformations have been undone in writing the left-hand side.

The word ``tangent'' is literal.  Define the pure Gaussian letter-output entropy as a function of its $q$ variance,
\begin{equation}
 e(u)=g\!\left(\sqrt{(ku+y/\omega)[k/(4u)+y\omega]}-\frac12\right).
 \label{eq:s-eu}
\end{equation}
With $s=(2u)^{-1}$ and $\mu$ from Eq.~\eqref{eq:s-mucxi},
\begin{equation}
 \frac{\dd\mu}{\dd u}=
 \frac{k}{2\mu}(b-a s^2)
 =-\frac{k y(s^2-\omega^2)}{2\mu\omega},
\end{equation}
and
\begin{equation}
 \frac{\dd}{\dd\mu}g(\mu-1/2)
 =\log_2\frac{\mu+1/2}{\mu-1/2}
 =2\log_2\frac1q.
\end{equation}
Consequently
\begin{equation}
                              \boxed{\alpha=-e'(u).}
 \label{eq:s-tangent-identity}
\end{equation}
For $s>\omega$, $e'(u)<0$ and $\alpha>0$.  At resonance $s=\omega$ the slope vanishes.

\section{Global supporting-hyperplane proof below threshold}

Let $T=2\bar N+1$.  For a Gaussian one-quadrature encoding with seed $u$,
\begin{equation}
 \gamma_u=\operatorname{diag}\!\left(u,\frac1{4u}\right),\qquad
 \avg V_u=\operatorname{diag}(u,T-u).
\end{equation}
Feasibility is
\begin{equation}
                      T-u-\frac1{4u}\ge0.
 \label{eq:s-u-feasible}
\end{equation}
Equivalently, in the squeezing variable $s=(2u)^{-1}$, $s_-\le s\le s_+$ with
$s_\pm=(\sqrt{\bar N+1}\pm\sqrt{\bar N})^2$.
The Gaussian low-energy objective is
\begin{equation}
 C_{\chi,\mathrm G}^{(1)}=\max_u\{G_T(u)-e(u)\},
 \label{eq:s-gauss-objective}
\end{equation}
where
\begin{equation}
 G_T(z)=g\!\left(\sqrt{(kz+y/\omega)[k(T-z)+y\omega]}-\frac12\right).
 \label{eq:s-GT}
\end{equation}
For $\bar N>0$ and a nontrivial channel $k>0$, the endpoints of Eq.~\eqref{eq:s-u-feasible} have zero modulation and hence zero Holevo information, while an interior displacement encoding has positive information.  Thus the low-energy maximizer $u_*$ is interior.  The known Gaussian solution~\cite{Schaefer2013,Schaefer2016} has $s_*=(2u_*)^{-1}>\omega$ strictly below threshold and reaches $s_*=\omega$ at threshold.

Now take an arbitrary centered input state with covariance Eq.~\eqref{eq:s-arbitrary-V} and $z+w_0\le T$.  Physicality gives $zw_0-c_0^2\ge1/4$, and therefore
\begin{equation}
 z(T-z)\ge zw_0\ge\frac14+c_0^2\ge\frac14.
 \label{eq:s-z-domain}
\end{equation}
Thus $z$ lies in the compact interval
\begin{equation}
 I_T=\left[\frac{T-\sqrt{T^2-1}}2,\frac{T+\sqrt{T^2-1}}2\right],
 \label{eq:s-IT}
\end{equation}
and $\operatorname{diag}(z,T-z)$ is itself a physical covariance.  The arbitrary input output covariance has determinant
\begin{equation}
 \det V_B=(kz+y/\omega)(kw_0+y\omega)-k^2c_0^2.
\end{equation}
Gaussian maximum entropy gives
\begin{align}
 S[\Phi(\rho)]
 &\le g\!\left(\sqrt{(kz+y/\omega)(kw_0+y\omega)-k^2c_0^2}-\frac12\right)\nonumber\\
 &\le G_T(z).
 \label{eq:s-output-upper}
\end{align}
The second inequality follows because the expression increases with $w_0$ and decreases with $c_0^2$, so it is maximized at $w_0=T-z$ and $c_0=0$.  First moments can be removed by a common displacement, which leaves all output entropies and Holevo information invariant while reducing the photon cost.

Construct the tangent witness of the previous section at the Gaussian optimizer $u_*$.  By Eqs.~\eqref{eq:s-fixed-msw}, \eqref{eq:s-tangent-bound}, and \eqref{eq:s-tangent-identity}, every input obeys
\begin{equation}
 \chi_\Phi(\rho)\le
 F(z):=G_T(z)-e(u_*)+\alpha_*(z-u_*),
 \qquad \alpha_*=-e'(u_*).
 \label{eq:s-F}
\end{equation}
The function
\begin{equation}
 D_T(z)=(kz+y/\omega)[k(T-z)+y\omega]
\end{equation}
is a concave quadratic.  For every $z\in I_T$, it is the determinant of the output covariance generated by the physical Gaussian input $\operatorname{diag}(z,T-z)$, hence $D_T(z)\ge1/4$.  The square root is increasing and concave on this domain, and $x\mapsto g(x-1/2)$ is increasing and concave for $x\ge1/2$.  Hence $G_T$ is concave on the complete physically admissible $z$ interval, and so is $F$.
Stationarity of Eq.~\eqref{eq:s-gauss-objective} gives
\begin{equation}
                         G_T'(u_*)=e'(u_*)=-\alpha_*,
\end{equation}
therefore $F'(u_*)=0$.  Concavity implies
\begin{equation}
 F(z)\le F(u_*)=G_T(u_*)-e(u_*)=C_{\chi,\mathrm G}^{(1)}.
\end{equation}
The Gaussian displacement ensemble attains the right-hand side, proving
\begin{equation}
 \boxed{C_\chi^{(1)}(\Phi_{\tau,\omega,y},\bar N)
       =C_{\chi,\mathrm G}^{(1)}(\Phi_{\tau,\omega,y},\bar N)}
 \label{eq:s-thermal-result}
\end{equation}
throughout the low-energy one-quadrature branch for every $\tau\ne0,1$ and every physical $y\ge|1-\tau|/2$.  The zero-energy case is immediate.

\section{Threshold, water filling, and the additive-noise limit}

For arbitrary $y$, the threshold at which the resonant seed $s=\omega$ becomes compatible with an isotropic average output is obtained by equating the two output variances:
\begin{equation}
 \boxed{\bar N_{\mathrm{thr}}=
 \frac{1}{2\omega}\left[1+\frac{y}{k}(1-\omega^2)\right]-\frac12.}
 \label{eq:s-threshold}
\end{equation}
Above threshold, the resonant seed has covariance
$\gamma(\omega)=\frac12\operatorname{diag}(\omega^{-1},\omega)$.  Gaussian input and output squeezes transform the channel, for the purpose of unconstrained minimum output entropy, to a phase-insensitive channel with noise $yI$.  The Gaussian optimizer theorem and output majorization results~\cite{Mari2014,Giovannetti2015,DePalma2016} therefore give the unrestricted minimum output entropy
\begin{equation}
                         S_{\min}=g\!\left(\frac{k}{2}+y-\frac12\right).
\end{equation}
The entropy maximum saturates the available photon budget.  At the resulting fixed output trace, the maximum entropy occurs at an isotropic output.  Above Eq.~\eqref{eq:s-threshold} the corresponding modulation covariance is positive semidefinite, so the independent maximum and minimum are simultaneously achievable.  Thus
\begin{equation}
 \boxed{C_\chi^{(1)}=
 g\!\left[k\!\left(\bar N+\frac12\right)+\frac{\Tr Y-1}{2}\right]
 -g\!\left(\frac{k}{2}+y-\frac12\right).}
 \label{eq:s-highcapacity}
\end{equation}
This is the known water-filling expression, now joined to the unrestricted low-energy result.

The additive-noise line $\tau=1$, $y>0$ follows by continuity.  Choose $\tau_n\to1$ with $\tau_n\ne1$ and keep $y,\omega$ fixed.  The matrices $X_{\tau_n}$ and $Y$ converge to those of $\Phi^{\rm F}_{1,\omega,y}$.  On every finite input-energy set the corresponding Gaussian channels converge strongly, with a uniform output-energy bound.  Proposition~6 of Ref.~\cite{ShirokovECD2018} gives a continuity bound for the energy-constrained Holevo capacity under precisely such a uniform output-energy bound, and hence the capacity is continuous along this strong-convergence sequence.  Since Eq.~\eqref{eq:s-thermal-result} holds for every $n$ and the explicit Gaussian variational problem is continuous in $\tau$, the equality passes to the limit.  The identity endpoint $\tau=1,y=0$ is immediate, and the fiducial $\tau=0$ channel has input-independent output and zero Holevo capacity.

Combining the low- and high-energy branches, for every fiducial channel and $k>0$,
\begin{equation}
 C_\chi^{(1)}=
 \max_{s_-\le s\le s_+}
 \left\{h(k\avg V(s,\bar N)+Y)-h(k\gamma(s)+Y)\right\}
 \label{eq:s-lowcapacity}
\end{equation}
for $0\le\bar N<\bar N_{\mathrm{thr}}$, with
$h(V)=g(\sqrt{\det V}-1/2)$ and
$\avg V(s,\bar N)=\operatorname{diag}((2s)^{-1},2\bar N+1-(2s)^{-1})$, while Eq.~\eqref{eq:s-highcapacity} holds above threshold.

\section{From fiducial to arbitrary single-mode Gaussian channels}

Ref.~\cite{Schaefer2013} proves that every full-rank single-mode Gaussian channel with $\tau=\det X\ne0$ and $y=\sqrt{\det Y}>0$ can be written as
\begin{equation}
                         \Phi=\mathcal U_M\circ\Phi^{\rm F}_{\tau,\omega,y}\circ\mathcal U_\Theta,
 \label{eq:s-fiducial-decomp}
\end{equation}
where $\Theta$ is a phase-space rotation, $M$ is a symplectic output transformation, and $\mathcal U_S$ denotes the Gaussian unitary channel associated with a symplectic transformation $S$.  Thus $\mathcal U_\Theta$ is a passive input phase rotation and $\mathcal U_M$ is an output Gaussian unitary.  The input rotation preserves photon number and the output unitary preserves all output entropies.  Therefore both the unrestricted and Gaussian-restricted energy-constrained one-shot Holevo capacities are invariant under Eq.~\eqref{eq:s-fiducial-decomp}.  Equation~\eqref{eq:s-thermal-result}, together with the water-filling branch, proves Gaussian optimality for every full-rank single-mode Gaussian channel.

For completeness, the lower-rank classes can be included without assigning them the same fiducial invariants.  A fixed affine output displacement has already been removed, since it changes neither output entropies nor Holevo information.  Let $\Phi=(X,Y)$ be an arbitrary single-mode Gaussian channel.  The strictly completely-positive, full-rank Gaussian channels are dense in the one-mode Gaussian-channel cone: one may choose full-rank $X_n\to X$ and add a vanishing positive noise perturbation to obtain full-rank $Y_n\to Y$ satisfying the complete-positivity inequality strictly.  Hence there is a sequence $\Phi_n=(X_n,Y_n)$ of full-rank Gaussian channels converging to $\Phi$.

Convergence of the finite matrices $X_n,Y_n$ implies strong convergence of the corresponding Gaussian channels.  Moreover, for every fixed input photon bound $\bar N$, the output second moments are uniformly bounded because $X_n$ and $Y_n$ remain bounded.  Since the photon Hamiltonian satisfies the Gibbs condition and the output energy is uniformly bounded on the constrained input set, Proposition~6 of Ref.~\cite{ShirokovECD2018} applies and the energy-constrained Holevo capacity is continuous along this sequence:
\begin{equation}
 C_\chi^{(1)}(\Phi_n,\bar N)\longrightarrow C_\chi^{(1)}(\Phi,\bar N).
 \label{eq:s-unrestricted-cont}
\end{equation}
To avoid assuming any structural reduction for an arbitrary Gaussian ensemble at a singular channel, introduce the smaller class $C_{\mathrm{disp,G}}^{(1)}$: a centered pure Gaussian seed with covariance $\gamma$, modulated by a centered classical Gaussian displacement distribution to an average covariance $V$.  Explicitly,
\begin{equation}
 C_{\mathrm{disp,G}}^{(1)}(\Phi,\bar N)=
 \max_{\substack{\det\gamma=1/4,\;V-\gamma\ge0\\ \Tr V\le2\bar N+1}}
 \{h(XVX^T+Y)-h(X\gamma X^T+Y)\}.
 \label{eq:s-dispG}
\end{equation}
The feasible covariance set is compact and the objective is continuous, so this quantity is continuous in the finite matrices $X,Y$:
\begin{equation}
 C_{\mathrm{disp,G}}^{(1)}(\Phi_n,\bar N)\longrightarrow
 C_{\mathrm{disp,G}}^{(1)}(\Phi,\bar N).
 \label{eq:s-disp-cont}
\end{equation}
For every full-rank $\Phi_n$, the fiducial proof above is achieved by precisely such a Gaussian displacement ensemble; hence
\begin{equation}
 C_{\mathrm{disp,G}}^{(1)}(\Phi_n,\bar N)=C_\chi^{(1)}(\Phi_n,\bar N).
\end{equation}
Taking $n\to\infty$ in this equality and Eq.~\eqref{eq:s-unrestricted-cont} gives
$C_{\mathrm{disp,G}}^{(1)}(\Phi,\bar N)=C_\chi^{(1)}(\Phi,\bar N)$.  Finally,
\begin{equation}
 C_{\mathrm{disp,G}}^{(1)}\le C_{\chi,\mathrm G}^{(1)}\le C_\chi^{(1)}
\end{equation}
forces all three quantities to coincide.  Consequently
\begin{equation}
 \boxed{C_\chi^{(1)}(\Phi,\bar N)=C_{\chi,\mathrm G}^{(1)}(\Phi,\bar N)}
 \label{eq:s-all-channels}
\end{equation}
for every single-mode Gaussian channel, including the lower-rank canonical classes.  Replacer and identity endpoints may of course be handled directly.

\section{Numerical reproduction and consistency checks}

Figure 1 of the Letter uses a genuinely thermal attenuator
\begin{equation}
                         \tau=0.60,\qquad \omega=0.40,\qquad y=0.30.
\end{equation}
The quantum-limited noise at this transmissivity would be $|1-\tau|/2=0.20$, so the example lies strictly in the mixed-environment regime.  Equation~\eqref{eq:s-threshold} gives
\begin{equation}
                         \bar N_{\mathrm{thr}}=1.275.
\end{equation}
For each $\bar N<\bar N_{\mathrm{thr}}$, bounded scalar optimization of Eq.~\eqref{eq:s-lowcapacity} gives the following values:
\begin{center}
\begin{tabular}{@{}ccc@{}}
\toprule
$\bar N$ & $C_\chi^{(1)}$ (bits/use) & $s_\star$\\
\midrule
0.000 & 0.000000 & 1.000000\\
0.250 & 0.523316 & 0.716871\\
0.500 & 0.872327 & 0.580066\\
1.000 & 1.334509 & 0.443251\\
1.275 & 1.516553 & 0.400000\\
2.000 & 1.895195 & 0.400000\\
3.000 & 2.291769 & 0.400000\\
\bottomrule
\end{tabular}
\end{center}
At $\bar N=0.5$, the optimizer has
\begin{equation}
 s_\star=0.580066,\quad
 \mu_\star=0.610391,
 \quad q_\star=0.315303,
 \quad t_p=0.557502,
 \quad \alpha_\star=0.216648.
\end{equation}
Numerical differentiation independently gives
$-e'(u_\star)=0.216648$ and
$G_T'(u_\star)=e'(u_\star)$ to the displayed precision.  Scanning the supporting function $F(z)$ in Eq.~\eqref{eq:s-F} over the complete feasible covariance interval places its unique maximum at $z=u_\star$.

The verification and plotting scripts supplied with the manuscript were prepared with assistance from ChatGPT (GPT-5.6 Sol, OpenAI).  The tested analytic identities are stated explicitly above; the code paths and numerical outputs were reviewed against those identities and the tabulated values before inclusion.

The analytic structure also provides the following checks:
\begin{enumerate}
 \item At $\bar N=0$, the feasible interval collapses and $C_\chi^{(1)}=0$.
 \item At threshold, $s_\star\to\omega$, $\alpha_\star\to0$, and the average output becomes isotropic.
 \item For $\omega=1$, $\bar N_{\mathrm{thr}}=0$ and only the phase-insensitive water-filling branch remains.
 \item At the quantum-limited value $y=|1-\tau|/2$, the result reduces to the pure-environment formulas of the original pointwise theorem.
 \item Random numerical tests over attenuating, amplifying, and phase-conjugating parameters verify the identity $\lambda_{q,t}\beta_{\rm W}=-e'(u)$ and the supporting inequality $F(z)\le F(u_\star)$; the supplied script \texttt{proof\_checks.py} reproduces these checks.
\end{enumerate}

\section{Why the result remains one-shot}

For $n$ channel uses and average input $\rho_{A^n}$, let $E$ denote the full complementary output associated with one use (one mode for a pure physical environment and two modes after purification of a mixed environment).  A Stinespring dilation gives
\begin{equation}
 \chi_{\Phi^{\otimes n}}(\rho_{A^n})
 =S(\Omega_{B^n})-\EF(B^n{:}E^n)_\Omega.
 \label{eq:s-nuse}
\end{equation}
The pointwise pure-environment argument would require fixed-covariance formation extremality for arbitrary $n\times n$ bipartitions, which is not supplied by the two-mode theorem or its bisymmetric extension~\cite{Adesso2026,Serafini2005}.  The thermal proof above is different but still single-use: its supporting EPR witness is tangent to a one-mode Gaussian optimizer and annihilates one modulation direction.  With correlated $n$-use inputs there are cross-use covariance directions and entanglement structures that are not controlled by the direct sum of these single-use witnesses.

Consequently the regularized classical capacity
\begin{equation}
 C(\Phi,\bar N)=\lim_{n\to\infty}\frac1n
 C_\chi^{(1)}(\Phi^{\otimes n},n\bar N)
\end{equation}
is not determined here.  Proving additivity would require an additional multimode statement, such as a suitable global formation witness or a strong-superadditivity principle.
\endgroup
\end{document}